\pdfoutput=1
\documentclass[letterpaper]{article}
\usepackage[totalwidth=480pt, totalheight=680pt]{geometry}

\usepackage{algorithm}
\usepackage{algorithmicx}
\usepackage{algpseudocode}
\usepackage{subcaption}
\usepackage{enumerate}
\usepackage{bm}
\usepackage{amsmath}
\usepackage{amssymb}
\usepackage{amsthm}
\usepackage{pifont}
\usepackage{booktabs}
\usepackage{xcolor}
\definecolor{darkgreen}{rgb}{0,0.5,0}
\usepackage{hyperref}
\hypersetup{
    unicode=false,          
    colorlinks=true,        
    linkcolor=red,          
    citecolor=darkgreen,    
    filecolor=magenta,      
    urlcolor=blue           
}
\usepackage[capitalize, nameinlink]{cleveref}

\usepackage{thm-restate}
\theoremstyle{plain}
\newtheorem{goal}{Goal}
\newtheorem{theorem}{Theorem}

\theoremstyle{definition}

\theoremstyle{remark}

\usepackage{tikz}
\usetikzlibrary{calc, graphs, graphs.standard, shapes, arrows, arrows.meta, positioning, decorations.pathreplacing, decorations.markings, decorations.pathmorphing, fit, matrix, patterns, shapes.misc, tikzmark}

\newcommand{\cA}{\mathcal{A}}

\newcommand{\eps}{\varepsilon}

\newcommand{\wh}{\widehat}


\title{A short note about the learning-augmented secretary problem
}
\author{
Davin Choo
\quad
Chun Kai Ling
\\
National University of Singapore
}

\begin{document}

\maketitle

\begin{abstract}
We consider the secretary problem through the lens of learning-augmented algorithms.
As it is known that the best possible expected competitive ratio is $1/e$ in the classic setting without predictions, a natural goal is to design algorithms that are 1-consistent and $1/e$-robust.
Unfortunately, \cite{fujii2024secretary} provided hardness constructions showing that such a goal is not attainable when the candidates' true values are allowed to scale with $n$.
Here, we provide a simple and explicit alternative hardness construction showing that such a goal is not achievable even when the candidates' true values are constants that do not scale with $n$.
\end{abstract}

\section{Introduction}

The secretary problem is a classic online problem involving selecting the best candidate from a pool of $n$ applicants that arrive in an online fashion, where the arrival sequence is a random permutation.
When the $i^{th}$ candidate arrives, his/her value $v_i$ is revealed and one has to decide whether to irrevocably reject this candidate, or accept this candidate, ending the processing and obtaining a reward of $r = v_i \geq 0$.
One can measure the performance of a protocol or an algorithm based on the probability $\Pr(r = v^*)$ of successfully recruiting the true best candidate, where $v^* = \max_{i \in [n]} v_i$, or the competitive ratio $\frac{r}{v^*}$ of the obtained reward.
The former is often known as the classic secretary problem and is studied through the lens of optimal stopping while the latter is sometimes known as the value-maximization variant.\footnote{An $\alpha$-approximation for the former implies an $\alpha$-approximation for the latter, and the competitive ratio can be made arbitrarily close to the probability of success by simply having $v^*$ be arbitrarily large compared to all other values.}
In this note, we focus on the latter performance measure, and it is well-known \cite{dynkin1963optimum,lindley1961dynamic} that one can acheive a competitive ratio of $1/e$ by Ignoring the first $\ell \approx n/e$ arrivals and choose the first candidate $i > n/e$ such that $v_i \geq \max_{j \in \{1, \ldots, \ell\}} v_j$.

We consider the secretary problem through the lens of learning-augmented algorithms, where one is additionally given side-information or predictions about the problem instance.
The design and analysis of such methods are often explored under the framework of learning-augmented algorithms\footnote{The website \url{{https://algorithms-with-predictions.github.io/}} tracks recent progress in this research area.} \cite{mitzenmacher2022algorithms} where the typical performance measures are consistency and robustness which quantify the two extremes of perfect advice and arbitrarily bad advice.
\cite{antoniadis2023secretary} explored the secretary problem given a prediction $\wh{v}^*$ of $v^*$ while \cite{fujii2024secretary} studied the setting where one is given predictions $\wh{v}_1, \ldots, \wh{v}_n$ of the true candidate values $v_1, \ldots, v_n$.
Observe that the prediction model of \cite{fujii2024secretary} is richer than that of \cite{antoniadis2023secretary}; see \cite[Section 3.1]{fujii2024secretary} for a more in-depth comparision.
Under this richer predictive model, \cite{fujii2024secretary} propose an algorithm called \textsc{LearnedDynkin} which achieves a competitive ratio of $\max \left\{ 0.215, \frac{1-\eps}{1+\eps} \right\}$, where $\eps = \max_{i \in [n]} \left| 1 - \frac{\wh{v}_i}{v_i} \right|$ measures the maximum multiplicative error, i.e.\ $(1 - \eps) v_i \leq \wh{v}_i \leq (1 + \eps) v_i$ for any candidate $i \in [n]$.
This implies that \textsc{LearnedDynkin} is $1$-consistent and $0.215$-robust since $\frac{1-\eps}{1+\eps} = 1$ when $\eps = 0$.

Since the best known competitive ratio for the secretary problem is $1/e \approx 0.367$, a natural goal is to design algorithms that are simultaneously 1-consistent and $1/e$-robust.

\begin{goal}
\label{goal}
Is there a learning-augmented algorithm for the secretary problem that is 1-consistent and $1/e$-robust?
\end{goal}

Unfortunately, \cite{fujii2024secretary} also provided hard instances over which no algorithm using predictions of the form $\wh{v}_1, \ldots, \wh{v}_n$ can achieve \cref{goal}.
For any constant $c > 0$, they showed that no deterministic algorithm can achieve competitive ratios better than $\max\{1 - c \eps, 0.25\}$ and no randomized algorithm can achieve competitive ratios better than $\max\{1 - c \eps, 0.348\}$.
We observe that the candidate values in their hardness constructions scale exponentially with the number of candidates $n$, and the randomized construction involves an exponentially large linear program which they could only solve up till $n = 7$.

In this short note, we provide a simple and explicit alternative hardness construction (\cref{thm:hardness}) that holds for any algorithm (deterministic or randomized), which holds for any $n \geq 3$, and where the candidates' true values are constants do not scale with $n$.
Contrary to what the hardness construction of \cite{fujii2024secretary} may suggest, our result implies that one cannot hope to achieve \cref{goal} even with constant candidate values that do not scale with $n$.
Given the construction in \cref{thm:hardness}, extension to the $n > 3$ case is straightforward and the analysis will actually reduce to the $n = 3$ case.
We explain how to do so after proving \cref{thm:hardness}.

\begin{theorem}
\label{thm:hardness}
There is a secretary instance for $n = 3$, i.e.\ $\{X_1, X_2, X_3\}$ will arrive in a uniform random order, whereby any 1-consistent algorithm cannot be strictly better than $1/3 + o(1)$-robust.
\end{theorem}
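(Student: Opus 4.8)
The plan is to exhibit a single prediction vector that is shared by two instances: a \emph{consistent} instance $I_0$ that pins down the algorithm's behavior, and a \emph{hard} instance $I_1$ on which this forced behavior is disastrous. Fix three predicted values $0 \le p_1 < p_2 < p_3$ (one may take $p_1 = \eps \to 0$, $p_2 = 1$, $p_3 = 2$). In $I_0$ the three true values equal the predictions, i.e.\ the multiset $\{p_1, p_2, p_3\}$; in $I_1$ the three true values are $\{p_1, p_1, p_2\}$, while the predictions are still $\{p_1, p_2, p_3\}$. The modelling point I will lean on is that, as each candidate arrives, the algorithm observes only its true value and knows the multiset of predictions in advance, so two candidates carrying equal values are indistinguishable.

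First I would turn $1$-consistency into a rigid constraint on $I_0$. Since every realized competitive ratio is at most $1$ and the expected ratio on $I_0$ must equal $1$, the algorithm must select the unique maximum $p_3$ for \emph{every} arrival order (and, for a randomized algorithm, almost surely). From this I extract a forced-rejection lemma: whenever the algorithm has so far observed a history of distinct values none equal to $p_3$ and currently sees a value in $\{p_1, p_2\}$, it must reject, since otherwise there is an arrival order of $I_0$ in which $p_3$ appears strictly later and is missed, contradicting $1$-consistency. Because the values in $I_0$ are distinct, such histories have length at most one, so the lemma bites exactly at step $1$ on any value in $\{p_1,p_2\}$, and at step $2$ on value $p_2$ after seeing $p_1$ (and symmetrically).

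Next I would transfer this behavior to $I_1$. The three equally likely observed value-sequences are $(p_2,p_1,p_1)$, $(p_1,p_2,p_1)$, and $(p_1,p_1,p_2)$. In the first two, the unique maximum (value $p_2$) is encountered at step $1$ or step $2$ at an observation that also occurs in $I_0$, so the forced-rejection lemma applies verbatim and the maximum is rejected; the algorithm can then collect at most $p_1$, for a ratio at most $p_1/p_2$. In the third sequence the maximum appears only last, and I simply bound the ratio by $1$. Averaging over the three equiprobable orders gives an expected competitive ratio at most $\tfrac13 \cdot 1 + \tfrac23 \cdot \tfrac{p_1}{p_2}$, which tends to $1/3$ as $p_1/p_2 \to 0$; this yields the claimed $1/3 + o(1)$ bound, and since the consistency argument holds almost surely it applies to randomized algorithms as well.

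I expect the main obstacle to be the transfer step: one must argue that the algorithm is genuinely compelled to reject the maximum of $I_1$ even though $I_1$'s true values differ from the predictions. This is exactly where the observation model matters, as the argument works because the algorithm sees only values (so the two copies of $p_1$ are indistinguishable and the partial value-histories in the two bad orders coincide with histories arising in $I_0$), and it would break if the algorithm could additionally observe each candidate's own prediction. I would therefore state the observation model explicitly and verify, order by order, that each forced rejection is invoked only at a history that genuinely occurs in $I_0$.
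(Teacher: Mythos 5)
There is a genuine gap, and it is exactly the issue you flag in your last paragraph: your argument only works in an observation model where the algorithm sees an unordered multiset of predictions and cannot link an arriving candidate to \emph{its own} prediction. That is not the model of this paper. Here (following \cite{fujii2024secretary}) the prediction is a vector $\wh{v}_1, \ldots, \wh{v}_n$, one entry per candidate, and the algorithm knows which candidate is arriving; this is what makes the error measure $\eps = \max_i |1 - \wh{v}_i / v_i|$ and algorithms such as \textsc{LearnedDynkin} (which compares an arrival's value against the predictions of the candidates that have not yet arrived) meaningful. In that model your two-instance construction fails concretely: fix the assignment $v_1 = p_1$, $v_2 = p_1$, $v_3 = p_2$ in $I_1$ against predictions $(\wh{v}_1, \wh{v}_2, \wh{v}_3) = (p_1, p_2, p_3)$, and consider the algorithm that follows your forced behavior as long as every arrival's value equals its own prediction, but the moment some candidate's value differs from its prediction concludes the instance is $I_1$ and from then on accepts candidate $3$ whenever it arrives (rejecting everything else). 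This algorithm is $1$-consistent, and on $I_1$ it attains ratio $1$ in \emph{every} arrival order: candidates $2$ and $3$ are both mispredicted, so the inconsistency is detected no later than the arrival of candidate $3$, which carries $I_1$'s maximum $p_2$. So $I_1$ is not a hard instance, and the claimed $1/3 + o(1)$ bound does not follow.

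The structural point you are missing is that with a \emph{single} alternative instance sharing the prediction, any detected misprediction pins down the instance completely, after which consistency imposes no constraint at all. The paper's proof is built precisely to survive this: it uses a \emph{distribution} over $2k-1$ instances (the rows of \cref{tab:hardness-construction-n=3}), all sharing one prediction. Consistency still forces the algorithm to accept $X_1 = s$ when it arrives first (contributing ratio $\approx 1/s$), but the real work is in the other two cases: when $X_2$ (or $X_3$) arrives first showing $s^i$, the algorithm already knows the prediction is wrong, yet it still faces a $50/50$ ambiguity between the row where $X_3 = s^{i-1}$ and the row where $X_3 = s^{i+1}$, which caps its conditional ratio at $\frac{1}{2}(1 + \frac{1}{s})$; averaging the three cases gives $\frac{1}{3} \cdot o(1) + \frac{2}{3}(\frac{1}{2} + o(1)) = \frac{1}{3} + o(1)$. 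If you try to patch your construction to survive index-aware algorithms, you will need several instances agreeing on each possible observation prefix but disagreeing on the future, i.e., essentially the paper's chain of paired rows (with the endpoint rows contributing only an $O(1/k)$ loss). So the proposal, as written, proves hardness only against a strictly weaker class of algorithms than the theorem requires.
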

\begin{proof}
Let \textsc{ALG} be an arbitrary 1-consistent algorithm and let $\eps > 0$, even $k \geq 20$ and $s > 1$ be constants to be determined later.
Consider the construction for $n = 3$ in \cref{tab:hardness-construction-n=3} whereby the realized sequence is guaranteed to be one of these $r = 2k-1$ rows, with the associated realization probabilities given, and the prediction given is the first row.
For $i \geq 3$, the term $s^i$ first appears on the row $2i - 4 + (i \mod 2)$.

\begin{table}[htb]
\centering
\begin{tabular}{cccccc}
\toprule
& $X_1$ & $X_2$ & $X_3$\\
\midrule
1 & $s$ & 1 & 1 & Realized with probability $\eps$ & (Prediction)\\
2 & $s$ & $s^2$ & {\color{blue}$s^3$} & Realized with probability $\frac{1-\eps}{r-1}$\\
3 & $s$ & {\color{blue}$s^3$} & $s^2$ & Realized with probability $\frac{1-\eps}{r-1}$\\
4 & $s$ & {\color{red}$s^4$} & {\color{blue}$s^3$} & Realized with probability $\frac{1-\eps}{r-1}$\\
5 & $s$ & {\color{blue}$s^3$} & {\color{red}$s^4$} & Realized with probability $\frac{1-\eps}{r-1}$\\
6 & $s$ & {\color{red}$s^4$} & {\color{orange}$s^5$} & Realized with probability $\frac{1-\eps}{r-1}$\\
7 & $s$ & {\color{orange}$s^5$} & {\color{red}$s^4$} & Realized with probability $\frac{1-\eps}{r-1}$\\
8 & $s$ & {\color{green!50!black}$s^6$} & {\color{orange}$s^5$} & Realized with probability $\frac{1-\eps}{r-1}$\\
9 & $s$ & {\color{orange}$s^5$} & {\color{green!50!black}$s^6$} & Realized with probability $\frac{1-\eps}{r-1}$\\
10 & $s$ & {\color{green!50!black}$s^6$} & {\color{purple}$s^7$} & Realized with probability $\frac{1-\eps}{r-1}$\\
11 & $s$ & {\color{purple}$s^7$} & {\color{green!50!black}$s^6$} & Realized with probability $\frac{1-\eps}{r-1}$\\
$\vdots$ & $\vdots$ & $\vdots$ & $\vdots$\\
$2k-4$ & $s$ & {\color{cyan}$s^{k}$} & ${\color{pink}s^{k-1}}$ & Realized with probability $\frac{1-\eps}{r-1}$\\
$2k-3$ & $s$ & ${\color{pink}s^{k-1}}$ & {\color{cyan}$s^{k}$} & Realized with probability $\frac{1-\eps}{r-1}$\\
$2k-2$ & $s$ & {\color{cyan}$s^{k}$} & $s^{k+1}$ & Realized with probability $\frac{1-\eps}{r-1}$\\
$2k-1$ & $s$ & $s^{k+1}$ & {\color{cyan}$s^{k}$} & Realized with probability $\frac{1-\eps}{r-1}$\\
\bottomrule
\end{tabular}
\caption{Hardness construction for $n = 3$. The values are constants since $k \geq 12$ and $s > 1$ are constants.}
\label{tab:hardness-construction-n=3}
\end{table}

There are three cases to consider, depending on whether $X_1$, $X_2$, or $X_3$ arrives first:
\begin{enumerate}
    \item With probability $1/3$, $X_1$ arrives first.

    With certainty, we observe $X_1 = s$.
    Since the \textsc{ALG} is 1-consistent, it has to accept $X_1$, resulting in an expected competitive ratio that is at most $\eps \cdot 1 + (1 - \eps) \cdot \frac{1}{s}$ since the competitive ratio is at most $\frac{1}{s}$ whenever any other row is realized.

    \item With probability $1/3$, $X_2$ arrives first.

    \begin{itemize}
        \item With probability $\eps$, we observe $X_2 = 1$.
        So, the realized sequence must be row 1 and \textsc{ALG} can achieve a competitive ratio of 1 by waiting for $X_1 = s$ to arrive.
    
        \item With probability $\frac{1-\eps}{r-1}$, we observe $X_2 = s^2$.
        So, the realized sequence must be row 3 and \textsc{ALG} can achieve a competitive ratio of 1 by waiting for $X_3 = s^3$ to arrive.
        
        \item With probability $\frac{1-\eps}{r-1}$, we observe $X_2 = s^{k+1}$.
        So, the realized sequence must be row $2k - 1$ and \textsc{ALG} can achieve a competitive ratio of 1 by accepting $X_2 = s^{k+1}$.
    
        \item With probability $1 - \eps - 2 \cdot \left( \frac{1-\eps}{r-1} \right) = (1-\eps) \cdot \left( \frac{r-3}{r-1} \right)$, we observe $X_2 = s^i$ for some $i \in \{3, \ldots, k\}$, i.e. one of the colored/non-black entries in the column $X_2$ shown in \cref{tab:hardness-construction-n=3}.
        No algorithm can distinguish between rows $2i - 4 + (i \mod 2)$ and $2i - 2 + (i \mod 2)$.
        If an algorithm $\cA$ accepts $X_2$, then it attains competitive ratio 1 when the realized row is $2i - 4 + (i \mod 2)$ and competitive ratio $1/s$ when the realized row is $2i - 2 + (i \mod 2)$, where each case occur with equal probability of $1/2$; a similar argument holds (with the roles of the row indices flipped) if $\cA$ rejects $X_2$.
        Consequently, any algorithm (including \textsc{ALG}) can achieve only a competitive ratio of at most $\frac{1}{2} \cdot 1 + \frac{1}{2} \cdot \frac{1}{s}$.
    \end{itemize}
    
    \item With probability $1/3$, $X_3$ arrives first.

    The analysis is exactly symmetrical with Case 2 with swapping the identities of $X_2$ and $X_3$, along with the row indices accordingly.
\end{enumerate}

Putting together, we see that the best attainable competitive ratio by \textsc{ALG} is
\begin{align*}
&\; \frac{1}{3} \cdot \left( \eps + (1-\eps) \cdot \frac{1}{s} \right) + 2 \cdot \frac{1}{3} \cdot \left( \eps + 2 \cdot \frac{1-\eps}{r-1} + (1-\eps) \cdot \left( \frac{r-3}{r-1} \right) \cdot \left( \frac{1}{2} + \frac{1}{2s} \right) \right)\\
= &\; \eps + \frac{2}{3} \cdot (1-\eps) \cdot \left( \frac{1}{2s} +  \frac{2}{r-1} +{\color{blue}\frac{r-3}{r-1}} \cdot \left( \frac{1}{2} + \frac{1}{2s} \right) \right) \tag{Refactoring}\\
< &\; \eps + \frac{2}{3} \cdot (1-\eps) \cdot \left( \frac{1}{2s} + \frac{2}{r-1} + \frac{1}{2} + \frac{1}{2s} \right) \tag{Since ${\color{blue}\frac{r-3}{r-1}} < 1$ and all other terms are non-negative}\\
= &\; \eps + \frac{2}{3} \cdot (1-\eps) \cdot \left( \frac{1}{2} + \frac{1}{s} + \frac{1}{k-1} \right) \tag{Since $r = 2k-1$}\\
= &\; \frac{1}{3} + \frac{2}{3} \cdot \left( \eps + (1-\eps) \cdot \left( \frac{1}{s} + \frac{1}{k-1} \right) \right)
\end{align*}

Defining $\alpha = \frac{1}{3} + \frac{2}{3} \cdot \left( \eps + (1-\eps) \cdot \left( \frac{1}{s} + \frac{1}{k-1} \right) \right)$ and $\beta = \frac{3}{2} \cdot \left( \frac{1}{e} - \frac{1}{3} \right) < 0.051819\ldots$, we see that
\[
\alpha < \frac{1}{e}
\iff
\eps + (1-\eps) \cdot \left( \frac{1}{s} + \frac{1}{k-1} \right) < \frac{3}{2} \cdot \left( \frac{1}{e} - \frac{1}{3} \right) = \beta
\iff \frac{1}{s} + \frac{1}{k-1} < \frac{\beta - \eps}{1 - \eps}
\]

Setting $\eps = \beta/2$, we see that
$
\frac{\beta - \eps}{1 - \eps}
= \frac{\beta/2}{1 - \beta/2}
< {\color{orange}0.0265987}
$.
If we further set $s \geq \left\lceil \frac{1/2}{{\color{orange}0.0265987}} \right\rceil = 19$ and $k-1 \geq \left\lceil \frac{1/2}{{\color{orange}0.0265987}} \right\rceil = 19$, the best attainable competitive ratio by \textsc{ALG} is strictly less than $1/e$.
That is, any arbitrary 1-consistent algorithm is strictly less than $1/e$-robust for $s = 19$ and $k = 20$ in \cref{tab:hardness-construction-n=3}.

In fact, we see that the upper bound competitive ratio $\alpha$ is independent of $n$, so we can effectively get the ratio to be $\frac{1}{3} + o(1)$.
For instance, to get $\frac{1}{3} + 0.01$, we can set $\eps = 0.01$, $s = k = 400$.
\end{proof}

\vspace{-10pt}
\paragraph{Extension to $n > 3$.}
In the construction of \cref{thm:hardness}, let $X_j = 1$ for all $j > 3$ throughout all rows.
Since $X_1 = s > 1$, any algorithm will ignore arrivals of $X_4, \ldots, X_n$. 
Furthermore, the posterior belief over rows does not change whenever such an arrival occurs (i.e.\ there is no information gain), implying that what matters is only the relative order of arrival between $X_1, X_2, X_3$.
Thus, the analysis reduces to the $n = 3$ case.

\vspace{-10pt}
\paragraph{Acknowledgements.}
The authors would like to thank Anupam Gupta for discussions and for giving feedback.
Additionally, DC would like to thank Eric Balkanski and Will Ma for discussions while he was attending the Workshop on Learning-Augmented Algorithms (August 19-21 2024) at TTIC Chicago.
DC thanks Kaito Fujii for pointing out a typo in one of the case analysis bullet points in the earlier arXiv version.

\bibliography{refs}

\begin{thebibliography}{AGKK23}

\bibitem[AGKK23]{antoniadis2023secretary}
Antonios Antoniadis, Themis Gouleakis, Pieter Kleer, and Pavel Kolev.
\newblock {Secretary and online matching problems with machine learned advice}.
\newblock {\em Discrete Optimization}, 48(2), 2023.

\bibitem[Dyn63]{dynkin1963optimum}
Evgenii~Borisovich Dynkin.
\newblock {The optimum choice of the instant for stopping a Markov process}.
\newblock {\em Soviet Mathematics}, 4:627--629, 1963.

\bibitem[FY24]{fujii2024secretary}
Kaito Fujii and Yuichi Yoshida.
\newblock {The Secretary Problem with Predictions}.
\newblock {\em Mathematics of Operations Research}, 49(2):1241--1262, 2024.

\bibitem[Lin61]{lindley1961dynamic}
Denis~V. Lindley.
\newblock {Dynamic Programming and Decision Theory}.
\newblock {\em Journal of the Royal Statistical Society: Series C (Applied Statistics)}, 10(1):39--51, 1961.

\bibitem[MV22]{mitzenmacher2022algorithms}
Michael Mitzenmacher and Sergei Vassilvitskii.
\newblock {Algorithms with Predictions}.
\newblock {\em Communications of the ACM}, 65(7):33--35, 2022.

\end{thebibliography}
\bibliographystyle{alpha}

\end{document}